\DeclareMathAlphabet{\mathbit}{OML}{cmr}{bx}{it}
\newcommand{\B}[1]{{\bf{#1}}}
\DeclareMathOperator{\Transpose}{T}
\DeclareMathOperator{\Hermitian}{H}
\newcommand{\Tr}{{\Transpose}}
\newcommand{\He}{{\Hermitian}}
\DeclareMathOperator{\Exp}{\mathbb{E}}
\DeclareMathOperator{\trace}{tr}
\DeclareMathOperator{\Real}{Re}
\DeclareMathOperator{\Imag}{Im}
\DeclareMathOperator*{\argmax}{argmax}
\newcommand{\eul}{{\text{e}}}
\newcommand{\imj}{{\text{j}}}
\newcommand{\hs}{\bar{\B{h}}}
\newcommand{\hns}{\tilde{\B{h}}}
\newtheorem{lemma}{Lemma}
\acrodef{ADC}[ADC]{Analog-to-Digital Converter}
\acrodef{AO}[AO]{Alternating Optimization}
\acrodef{AoA}[AoA]{Angle of Arrival}
\acrodef{AoD}[AoD]{Angle of Departure}
\acrodef{APN}[APN]{Analog Precoding Network}
\acrodef{ASK}[ASK]{Amplitude-Shift Keying}
\acrodef{AWGN}[AWGN]{Additive White Gaussian Noise}
\acrodef{BER}[BER]{Bit Error Ratio}
\acrodef{BF}[BF]{Beamforming}
\acrodef{BFN}[BFN]{Beamforming Network}
\acrodef{BS}[BS]{Base Station}
\acrodef{CB}[CB]{conjugate beamforming}
\acrodef{COMP}[COMP]{Covariance OMP}
\newacro{CSI}{Channel State Information}
\acrodef{DAC}[DAC]{Digital to Analog Converter}
\acrodef{DBS}[DBS]{Distance-Based Scheduling}
\acrodef{DCS}[DCS]{Digital Communication System}
\acrodef{DCOMP}[DCOMP]{Dynamic COMP}
\newacro{DFT}{Discrete Fourier Transform}
\acrodef{DL}[DL]{downlink}
\acrodef{DOA}[DOA]{Direction Of Arrival}
\acrodef{DPC}[DPC]{dirty-paper coding}
\acrodef{ELAA}[ELAA]{extremely-large aperture array}
\acrodef{ESD}[ESD]{Energy Spectral Density}
\newacro{FDD}{Frequency-Division Duplex}
\acrodef{FSK}[FSK]{Frequency-Shift Keying}
\acrodef{FT}[FT]{Fourier Transform}
\acrodef{HT}[HT]{Hilbert Transform}
\acrodef{HI}[HI]{Harmonic Interference}
\acrodef{ICI}[ICI]{Inter-Carrier Interference}
\acrodef{IL}[IL]{Insertion Losses}
\acrodef{ISI}[ISI]{Inter-Symbol Interference}
\acrodef{IUI}[IUI]{inter-user interference}
\acrodef{JSDM}[JSDM]{Joint Spatial Division and Multiplexing}
\acrodef{LBFN}[LBFN]{Linear Beamforming Network}
\acrodef{LLF}[LLF]{Log-Likelihood function}
\acrodef{LMD}[LMD]{Linearly Modulated Digital}
\acrodef{LoS}[LoS]{line-of-sight}
\newacro{MAP}{Maximum A Posteriori}
\acrodef{MIMO}[MIMO]{Multiple-Input Multiple-Output}
\newacro{ML}{Maximum Likelihood}
\newacro{MMSE}{Minimum Mean Squared Error}
\acrodef{MMV}[MMV]{Multiple Measurement Vector}
\acrodef{mmWave}[mmWave]{millimeter wave}
\acrodef{MRC}[MRC]{Maximum Ratio Combining}
\acrodef{MRT}[MRT]{Maximum Ratio Trasmission}
\newacro{MSE}{Mean Squared Error}
\acrodef{MUSIC}[MUSIC]{MUltiple SIgnal Classification}
\acrodef{NLoS}[NLoS]{Non Line-of-sight}
\acrodef{NMSE}{Normalized Mean Squared Error}
\acrodef{OFDM}[OFDM]{Orthogonal Frequency-Division Multiplexing}
\acrodef{OFDMA}[OFDMA]{Orthogonal Frequency-Division Multiple Access}
\acrodef{OMP}[OMP]{Orthogonal Matching Pursuit}
\acrodef{OSMP}[OSMP]{Orthogonal Subspace Matching Pursuit}
\acrodef{PA}[PA]{Power Amplifier}
\acrodef{PS}[PS]{Phase Shifter}
\acrodef{PSK}[PSK]{Phase-Shift Keying}
\acrodef{QAM}[QAM]{Quadrature Amplitude Modulation}
\acrodef{RF}[RF]{Radio Frequency}
\acrodef{RFC}[RFC]{Rayleigh Fading Channel}
\acrodef{SDMA}[SDMA]{space-division multiple access}
\acrodef{SE}[SE]{sum-spectral efficiency}
\acrodef{SER}[SER]{Symbol Error Rate}
\acrodef{SIC}[SIC]{Successive Interference Cancellation}
\acrodef{SR}[SR]{Sideband Radiation}
\acrodef{SINR}[SINR]{Signal-to-Interference-plus-Noise Ratio}
\acrodef{SLL}[SLL]{Side-Lobe Level}
\acrodef{SOCP}[SOCP]{Second-Order Cone Program}
\acrodef{SOMP}[SOMP]{Simultaneous-Orthogonal Matching Pursuit}
\acrodef{SPDT}[SPDT]{Single-Pole-Double-Throw}
\acrodef{SPST}[SPST]{Single-Pole-Single-Throw}
\acrodef{SR}[SR]{Sideband Radiation}
\acrodef{SS}[SS]{Spatial Smoothing}
\acrodef{SNR}[SNR]{Signal-to-Noise Ratio}
\acrodef{SUS}{semi-orthogonal user selection}
\acrodef{SW}{spherical wavefront}
\newacro{TDD}{Time-Division Duplex}
\acrodef{TM}[TM]{Time Modulation}
\acrodef{TMA}[TMA]{Time-Modulated Array}
\acrodef{ULA}[ULA]{Uniform Linear Array}
\acrodef{UPA}[UPA]{Uniform Planar Array}
\acrodef{VGA}[VGA]{Variable Gain Amplifier}
\acrodef{VPS}[VPS]{Variable Phase Shifter}
\acrodef{VR}[VR]{visibility regions}
\acrodef{XL}[XL]{extra-large}
\acrodef{ZF}[ZF]{Zero-Forcing}
\def\blfootnote{\xdef\@thefnmark{}\@footnotetext}
\begin{document}

\title{{Joint User Scheduling and Precoding for\\ XL-MIMO Systems with Imperfect CSI}}

\author{José P. González-Coma,
	 F. Javier López-Martínez,~\IEEEmembership{Senior Member,~IEEE,}
Luis Castedo,~\IEEEmembership{Senior Member,~IEEE.}}

\maketitle
\begin{abstract}
%
We propose an algorithm for joint precoding and user selection in multiple-input multiple-output systems with extremely-large aperture arrays, assuming realistic channel conditions and imperfect channel estimates. The use of long-term channel state information (CSI) for user scheduling, and a proper selection of the set of users for which CSI is updated allow for obtaining an improved achievable sum spectral efficiency. We also confirm that the effect of imperfect CSI in the precoding vector design and the cost of training must be taken into consideration for realistic performance prediction.
\end{abstract}
\IEEEpeerreviewmaketitle
\vspace{-2mm}
\begin{IEEEkeywords}
Antenna arrays, massive MIMO, near-field, precoding, XL-MIMO.
\end{IEEEkeywords}
\vspace{-3mm}
\blfootnote{\noindent Manuscript received Feb. XX, 2023; revised XX, 2023. This work was funded in part by Junta de Andaluc\'ia through grant EMERGIA20-00297, and in part by MCIN/AEI/10.13039/501100011033 through grant PID2020-118139RB-I00. The authors thank the Defense
University Center at the Spanish Naval Academy (CUD-ENM) for the support provided for this research.  This work has been submitted to the IEEE for possible publication. Copyright may be transferred without notice, after which this version may no longer be accessible. J.P. Gonz\'alez-Coma is with Defense University Center at the Spanish Naval Academy, Marín 36920, Spain. F. J. L\'opez-Mart\'inez is with Dept. Signal Theory, Networking and Communications, University of Granada, 18071, Granada (Spain). L. Castedo is with Dept. Computer Engineering \& CITIC Research Center, University of A Coru\~na,  A Coru\~na 15001, Spain.
}
\vspace{-2mm}
\section{Introduction}

The use of \acp{ELAA} to further boost the potential of the massive \ac{MIMO} concept is a key ingredient to enable new use cases and emerging applications in the evolution towards 6G \cite{Emil2019}. In these systems, also referred to as \ac{XL}-\ac{MIMO} systems, the number of antennas at the \ac{BS} is in the order of hundreds, which enables to serve a much larger set of users. One of the crucial challenges when considering the case of XL-MIMO is that, when the user distances to the \ac{BS} are comparable to the \ac{BS} size, the consideration of \ac{SW} propagation is instrumental to exploit the features of this regime \cite{LuZe22,Cui23}. 

The promising attributes of \ac{XL}-\ac{MIMO} systems in multi-user set-ups come in hand with important physical layer challenges: first, the cost of training overhead for channel estimation is proportional to the number of users and antennas, which translates into a larger complexity and also penalizes the achievable throughput  \cite{Marzetta2010}. Similarly, the complexity of the precoding design and the selection of the optimal set of users for transmission grows with the problem dimension (i.e., antennas and users) \cite{Muller2016,YoGo06}. Besides, precise channel characterization beyond simplified \ac{LoS} and one-ring approaches conventionally assumed in this context \cite{LuZe22,Han2020} are required to realistically predict network performance \cite{Chen20}. Finally, although the assumption of perfect \ac{CSI} availability is of widespread use in the \ac{XL}-\ac{MIMO} literature \cite{Marinello2020,GoLoCa21,Filho2022}, channel \textit{does} vary due to microscopic perturbations affecting \textit{both} the dominant specular components and the diffusely propagating multipath waves \cite{Demir22}, which introduces \ac{CSI} uncertainty and reduces the system performance \cite{TrHe13}.

Aiming to address the aforementioned challenges in the context of \ac{XL}-\ac{MIMO} systems, we analyze the problem of joint user scheduling and precoding in a realistic set-up. Specifically, we investigate the achievable \ac{SE} in a \ac{DL} multi-user scenario with a very large number of \ac{BS} antennas and users, on which user selection and precoding design are performed using outdated and noisy channel estimations. SW propagation features are integrated with a general channel model that combines the characteristics of the one-ring model with those of spatially correlated diffuse components \cite{Demir22}. We propose a channel training procedure in which only a subset of users participates in the training stage, thus reducing the training overhead and improving the achievable \ac{SE}. We design a joint user scheduling and precoding algorithm using imperfect and outdated channel estimates, and analyze the performance loss incurred due to such imperfect \ac{CSI}. Results show that the proposed algorithm outperforms other benchmark approaches based on \ac{SUS} when the cost of training overhead is considered.

\textit{Notation:} Lower and upper case bold letters denote vectors and matrices, while $\mathbb{C}^M$ is  complex vector space with dimension $M$; $\left ( \cdot \right )^\Tr$, $\left ( \cdot \right )^\He$, and $\trace\left ( \cdot \right )$ denote the transpose, Hermitian transpose, and trace operations; $\Real\{\cdot\}$ and $\Imag\{\cdot\}$ represent the real and the imaginary parts of a complex number; symbol $\sim$ reads as \emph{statistically distributed as}, and $\Exp\big[\cdot\big]$ is the statistical expectation; $\mathcal{N}_\mathbb{C}(\mathbf{0},\B{C})$ is a zero-mean Gaussian distribution with covariance $\B{C}$, and $\mathcal{U}[a,b)$ is a Uniform distribution within the interval; $\|\cdot\|$ is the Euclidean norm, and $|\cdot|$ a set cardinality.

\section{System model}
\label{sec:model}
Let us consider the \ac{DL} of an \ac{XL}-\ac{MIMO} setup, where the \ac{BS} equipped with an \ac{ELAA} with $M\gg 1$ elements sends data to $K$ single-antenna users. 
We assume a 2-D scenario where the \ac{BS} \ac{ELAA} is an \ac{ULA} centered at the origin of a circular coordinate system. In this scenario, each position is determined by a given radius $r_k$, which represents the distance to the center of the \ac{ULA}, and a given angle $\theta_k$. The signal arrives to the users through a \ac{LoS} channel component, and/or after several reflections. This is captured by the general channel model in \cite{Demir22}, so that the the $k$-th user channel vector $\B{h}_{\rm k}\in\mathbb{C}^M$ is expressed as
\begin{equation}
\B{h}_{\rm k}=\sum\nolimits_{s=1}^{S_k}\eul^{\imj \varphi_{k,s}}\hs_{\rm k,\rm s}+\hns_{\rm k},
\label{eq:channelModel}
\end{equation}
where $\varphi_{k,s}$ represents the phase shift due to microscopic perturbations; $\hns_{\rm k}$ is the contribution of diffuse scattering as the summation of numerous weak multipath waves; $\hs_{\rm k,\rm s}$ represent the dominant ${S_k}$ specular components\footnote{When using very large \ac{MIMO} and/or highly directive steerable antennas, specular paths need to be considered for realistic channel modeling \cite{metis2015}.} for user $k$, with $s=1$ denoting the \ac{LoS} component and $s=2\ldots S_k$ indicating the additional specular waves -- which are regarded as \ac{NLoS} paths. Assuming a block fading channel model of length $\tau_c$, with smaller duration that the channel coherence time, both $\varphi_{k,s}$ and $\hs_{\rm k,\rm s}$ are constant $\forall k,s$.  
On the contrary, $\varphi_{k,s}$ and $\hns_{\rm k}$ take independent realizations from one coherence block to another, but their statistics are available at the transmitter as follows: $\varphi_{k,s}\sim\mathcal{U}[0,2\pi)$  and  $\hns_{\rm k}\sim\mathcal{N}(\bf{0},\B{R}_{\rm k})$, with $\B{R}_{\rm k}$ capturing the spatial correlation of the antenna array and the long-term channel effects \cite{Demir22}. The power magnitudes of the specular and diffuse components are $\Omega_{\rm s}\approx\sum_{s=1}^{S_k}\|\hs_{\rm k,\rm s}\|^2$ and $\Omega_{\rm d}=\trace(\B{R}_{\rm k})$, respectively, with the power ratio being defined as $\kappa=\frac{\Omega_{\rm s}}{\Omega_{\rm d}}$, similar to the Rician $K$ parameter. 

Let $r_{k,s}$ and $\theta_{k,s}$ be the radius and the angle corresponding to either the user $k$ location for a \ac{LoS} component, or to the last reflection for a \ac{NLoS} specular path. 
Hence, the response vector $\hs_{\rm k,\rm s}$ in \eqref{eq:channelModel} reads as
\begin{equation}
\hs_{\rm k,\rm s}=\rho_{k,s}[\eul^{-\imj\frac{2\pi}{\lambda}r_{k,s,1}},\eul^{-\imj\frac{2\pi}{\lambda}r_{k,s,2}},\ldots,\eul^{-\imj\frac{2\pi}{\lambda}r_{k,s,M}}]^\Tr,
\end{equation}
where $\lambda$ is the signal wavelength and $\rho_{k,s}\in(0,1]$ is the attenuation of the specular component due to physical effects like path-loss, reflection or absorption. Finally, $r_{k,s,m}$ stands for the radius of the $m$-th element of the antenna array. Considering an spherical wave-front, this radius is
\begin{align}
r_{k,s,m}=r_{k,s}\sqrt{1-2m{d}_{k,s}\sin\theta_{k,s}+{d}_{k,s}^2m^2},  
\label{eq:antennaDistance}
\end{align}
with $m\in\left[-\tfrac{M}{2},\tfrac{M}{2}\right]$, $d_{k,s}=\frac{d}{r_{k,s}}$, and $d$ being the \ac{ULA} inter-antenna distance. 

The data symbols to be transmitted, denoted as $s_k\sim\mathcal{N}_\mathbb{C}(0,1)$, $k=1,\ldots,K$ are processed using linear precoding vectors ${\bf{p}}_{\rm k}\in\mathbb{C}^M$ that satisfy the power constraint $\sum_{k=1}^K\|{\bf{p}}_{\rm k}\|_2^2\leq P_\text{TX}$, being $P_\text{TX}$ the available transmit power at the \ac{BS}. At the user $k$, the received signal is
\begin{equation}
\label{eq4}
y_k=s_k{\bf{p}}_{\rm k}^\He{\bf{h}}_{\rm k}+\sum\nolimits_{j\neq k}s_j{\bf{p}}_{\rm j}^\He{\bf{h}}_{\rm k}+n_k,
\end{equation}
where the intended symbol $s_k$ is affected  by \ac{IUI} and \ac{AWGN} represented by $n_k\sim\mathcal{N}_\mathbb{C}(0,\sigma^2_n)$. By considering the \ac{IUI} (second term in \eqref{eq4}) as noise and assuming Gaussian signaling, the achievable \ac{SE} for user $k$ is defined as \cite{YoGo06}
\begin{align}
\label{eq:sumRate}
R_k=\left(1-\frac{\tau_p}{\tau_c}\right)\log_2\Bigg(1+\tfrac{|{\bf{p}}_{\rm k}^\He{\bf{h}}_{\rm k}|^2}{\sigma^2_n+\sum_{j\in\mathcal{S},j\neq k}|{\bf{p}}_{\rm j}^\He{\bf{h}}_{\rm k}|^2}\Bigg),
\end{align}
where $\tau_p$ is the number of channel uses employed for training on each coherence block.
The goal is to find the set of users $\mathcal{S}\subseteq\{1,\ldots,K\}$, and their corresponding precoders, that can be served in the same time-frequency resource, so that the achievable \ac{SE} is maximized, i.e.,
\begin{equation}
\argmax_{\{{\bf{p}}_{\rm k}\}_{k\in\mathcal{S}}} \sum\nolimits_{k\in\mathcal{S}}R_k\quad\text{s.t.}\quad\sum\nolimits_{k\in\mathcal{S}}\|{\bf{p}}_{\rm k}\|_2^2\leq P_\text{TX}.
\label{eq:problemForm}
\end{equation}

\section{User scheduling with imperfect CSI}

It is clear from \eqref{eq:problemForm} that the system performance strongly depends on the determination of the user set $\mathcal{S}$. While the consideration of instantaneous \ac{CSI} fully available at the BS is a typical assumption in the literature \cite{GoLoCa21,Filho2022,Souza2022,Ribeiro2021}, channel uncertainties due to channel aging and imperfect estimation also impact on this choice \cite{Chen20}. Besides, the cost of channel training reduces the achievable \ac{SE} as a pre-log factor in \eqref{eq:sumRate}, an issue usually neglected in the related literature. With all these aspects in mind, we tackle the problem of user scheduling under imperfect \ac{CSI}. This is accomplished in two steps: first define an \textit{equivalent channel gain} metric to estimate user priorities based on the channel estimates available at the \ac{BS}, and next proceed to the joint precoding and scheduling design based on imperfect CSI.

\subsection{Equivalent channel gains}
According to the block fading channel model described in the previous section, the diffuse component $\hns_{\rm k}$ and phase shifts $\varphi_{k,s}$ in \eqref{eq:channelModel} are unknown to the BS \cite{Demir22}. Hence, the expected channel gain for user $k$ is estimated as:
%
%
\begin{align}
	&\Exp\big[\|\B{h}_{\rm k}\|_2^2\big]=\sum\nolimits_{s=1}^{S_k}\|\hs_{\rm k,\rm s}\|_2^2+2\Exp\big[\Real\{\hns_{\rm k}^\He\hs_{\rm k,\rm s}\eul^{\imj \varphi_{k,s}}\}\big]\nonumber\\
	&+2\sum\nolimits_{z>s}^{S_k}\Exp\big[\Real\{\hs_{\rm k,\rm z}^\He\hs_{\rm k,\rm s}\eul^{\imj (\varphi_{k,s}-\varphi_{k,z})}\}\big]+\Exp\big[\|\hns_{\rm k}\|_2^2\big]\nonumber\\
	&\overset{(a)}{=}\sum\nolimits_{s=1}^{S_k}\|\hs_{\rm k,\rm s}\|_2^2+2\Exp\big[\Real\{\hns_{\rm k}^\He\hs_{\rm k,\rm s}\eul^{\imj \varphi_{k,s}}\}\big]+\trace(\B{R}_{\rm k})\nonumber\\
	&\overset{(b)}{=}\sum\nolimits_{s=1}^{S_k}\|\hs_{\rm k,\rm s}\|_2^2+\trace(\B{R}_{\rm k}).
	\label{eq:avgChannelGain}
\end{align}
In \eqref{eq:avgChannelGain}, equality (\textit{a}) comes from the fact that expectations for cross-products are $\sim0$ because of phase uncertainty: note first that small values for $\hs_{\rm k,\rm z}^\He\hs_{\rm k,\rm s}$, $s\neq z$ are likely; second, even if these products are non-negligible, we obtain
\begin{align}
	\Exp\big[\Real\{\hs_{\rm k,\rm z}^\He\hs_{\rm k,\rm s}&\eul^{\imj (\varphi_{k,s}-\varphi_{k,z})}\}\big]=\notag\\
	&\Real\{\hs_{\rm k,\rm z}^\He\hs_{\rm k,\rm s}\}\Exp\big[\Real\{\eul^{\imj (\varphi_{k,s}-\varphi_{k,z})}\}\big]\label{eq:gainTerm}\\
	&-\Imag\{\hs_{\rm k,\rm z}^\He\hs_{\rm k,\rm s}\}\Exp\big[\Imag\{\eul^{\imj (\varphi_{k,s}-\varphi_{k,z})}\}\big]\notag.
\end{align}
Focusing now on the first term in \eqref{eq:gainTerm} (a similar rationale applies to the second one), we obtain
\begin{align*}
	&\Exp\big[\Real\{\eul^{\imj (\varphi_{k,s}-\varphi_{k,z})}\}\big]=\Exp\big[\cos(\varphi_{k,s})\cos(\varphi_{k,z})\big]\\
	&+\Exp\big[\sin(\varphi_{k,s})\sin(\varphi_{k,z})\big]=\\
	&\left(\tfrac{1}{2\pi}\int_{0}^{2\pi}\cos(\varphi_{k,s})\right)^2+\left(\tfrac{1}{2\pi}\int_{0}^{2\pi}\sin(\varphi_{k,s})\right)^2=0,
\end{align*}
where last equality comes from the i.i.d. property of the phase shifts related to the microscopic perturbations for the channel paths in \eqref{eq:channelModel}. For equality (\textit{b}) in \eqref{eq:avgChannelGain} we use
\begin{align}
	\Exp\big[\Real\{\hns_{\rm k}^\He\hs_{\rm k,\rm s}\eul^{\imj \varphi_{k,s}}\}\big]&=
	\Exp\big[\Real\{\hns_{\rm k}^\He\hs_{\rm k,\rm s}\}\big]\Exp\big[\cos(\varphi_{k,s})\big]\notag\\
	&-\Exp\big[\Imag\{\hns_{\rm k}^\He\hs_{\rm k,\rm s}\}\big]\Exp\big[\sin(\varphi_{k,s})\big]=0,\notag
\end{align}
that is a consequence of $\hns_{\rm k}\sim\mathcal{N}(\bf{0},\B{R}_{\rm k})$. 
%

Now, when the number of users $K$ and antennas $M$ are large, determining $	\Exp\big[\|\B{h}_{\rm k}\|_2^2\big]$ $\forall k$ is costly. Therefore, we propose to approximate \eqref{eq:avgChannelGain} by the following expression
\begin{equation}
	\Exp\big[\|\B{h}_{\rm k}\|_2^2\big]\approx M\left(1+\frac{1}{\kappa}\right)\sum\nolimits_{s=1}^{S_k}\rho_{k,s}^2=g_k.
	\label{eq:gain}
\end{equation}
Note that the \textit{equivalent} channel gain $g_k$ provides an approximation to the expected channel gains based on the long-term \ac{CSI} available at the \ac{BS}. Building on this idea, we will develop an iterative method to jointly select the users to be served by the \ac{BS} and design their precoding vectors for \ac{DL} transmission.

\subsection{Imperfect CSI scheduling and precoding}

Alg. \ref{alg:Scheduler}, presents the proposed imperfect-CSI scheduling and precoding (ISP) algorithm, which is based on the following rationale: During data transmission over the $n$-th coherence block, users to be served in the block $n+1$ are scheduled one by one. The algorithm starts determining the expected channel gains $g_k$ $\forall k \in \{1,\ldots,K\}$ as in \eqref{eq:gain}, using the long-term channel statistics, and scheduling the one with the largest gain. In subsequent iterations, the $g_k$ values are refined with a penalty factor accounting for the \ac{IUI} caused by previously scheduled users and, again, the user with the largest gain is scheduled. This penalty factor also considers the \ac{CSI} available at the \ac{BS} for the $n$-th coherence block. To further detail Alg. \ref{alg:Scheduler}, let $\mathcal{S}^{(\ell)}$ be the set of scheduled users at iteration $\ell$, with, $|\mathcal{S}^{(\ell)}|=\ell$. Therefore, the {equivalent} channel gain for a user $k\notin \mathcal{S}^{(\ell)}$ at iteration $\ell$, accounting for \ac{IUI}, is given by 
\begin{equation}
	g_k^{(\ell)}=g_k-\sum\nolimits_{s=1}^{S_k}\hs_{\rm k,\rm s}^\He\B{F}^{(\ell)}\hs_{\rm k,\rm s}-\trace\left(\B{R}_{\rm k}\B{F}^{(\ell)}\right),
	\label{eq:gupadte}
\end{equation}
where $\B{F}^{(\ell)}=\sum_{i=1}^\ell\B{f}_{\rm k_i}^{(i)}(\B{f}_{\rm k_i}^{(i)})^{\He}$ with $\B{f}_{\rm k_i}^{(i)}$ unit-norm precoders for the user $k_i\in\mathcal{S}^{(\ell)}$, selected at iteration $i$. The update of $g_k^{(\ell)}$ in \eqref{eq:gupadte} relies on computations similar to those in \eqref{eq:avgChannelGain} but the spatial correlation matrix $\B{R}_{\rm k}$ and the matrix of precoders $\B{F}^{(\ell)}$ must be determined based on imperfect \ac{CSI}. An expression for the spatial correlation matrix $\B{R}_{\rm k}$ is given in the following lemma. For the sake of notation simplicity,  index $k$ is dropped.

\begin{lemma}\label{thm}
Let $[\B{R}]_{m,n}$ be the element at row $m$ and column $n$ of the spatial correlation matrix $\B{R}$ given by 
\begin{equation}
\label{eqR}
[\B{R}]_{m,n}=\beta\int \eul^{-j\frac{2\pi}{\lambda}r_m}\eul^{j\frac{2\pi}{\lambda}r_n}f(\theta)d\theta,
\end{equation}
where $\beta$ is the average gain, $r_m$ and $r_n$ are the distances corresponding to the $m$ and $n$ antennas, and $f(\theta)$ is the probability density function. Then, $[\B{R}]_{m,n}$ can be computed in closed-form as
\begin{align}
[\B{R}]_{m,n}&=\tfrac{\beta\eul^{\imj\left(a-\frac{b^2}{4c}\right)}\sqrt{\pi}(1+\imj)}{4\varphi\sqrt{2c}}\bigg[\phi\left(\tfrac{1-\imj}{\sqrt{2}}\big(\sqrt{c}\varphi+\tfrac{b}{2\sqrt{c}}\big)\right)\notag\\
&-\phi\left(\tfrac{1-\imj}{\sqrt{2}}\big(-\sqrt{c}\varphi+\tfrac{b}{2\sqrt{c}}\big)\right)\bigg].
\label{eq:covUniform}
\end{align}
where $\phi(\cdot)$ is the error function, and
\begin{align*}
a&=\tfrac{2\pi}{\lambda}\big[(m-n)d\sin(\vartheta)+\tfrac{(n^2-m^2)d^2}{2r}\cos^2(\vartheta),\label{eq:aux}\\
b&=\tfrac{2\pi}{\lambda}\cos(\vartheta)[(m-n)d-\tfrac{(n^2-m^2)d^2}{r}\sin(\vartheta)],\\
c&=\tfrac{2\pi}{\lambda}\tfrac{(n^2-m^2)d^2}{2r}\sin^2(\vartheta).
\end{align*}

\end{lemma}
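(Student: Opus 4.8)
The plan is to collapse the product of exponentials in \eqref{eqR} into a single phase, replace the exact spherical-wavefront distances by their parabolic (Fresnel) approximation so that this phase becomes a quadratic in the integration angle, complete the square, and evaluate the resulting Fresnel integral through the complex error function $\phi$.

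First I would substitute \eqref{eq:antennaDistance} (with the indices $k,s$ dropped and the integration angle written as $\theta$) into \eqref{eqR}, so that the integrand carries the phase $\psi(\theta)=\tfrac{2\pi}{\lambda}(r_n-r_m)$. Expanding $r_m=r\sqrt{1-2m\tfrac{d}{r}\sin\theta+(m\tfrac{d}{r})^2}$ to second order gives $r_m\approx r-md\sin\theta+\tfrac{m^2d^2}{2r}\cos^2\theta$, hence $\psi(\theta)\approx\tfrac{2\pi}{\lambda}\big[(m-n)d\sin\theta+\tfrac{(n^2-m^2)d^2}{2r}\cos^2\theta\big]$, which retains the dominant near-field curvature and drops the $O((d/r)^3)$ remainder. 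I would then take $f(\theta)$ to be the uniform density over the angular support $[\vartheta-\varphi,\vartheta+\varphi]$, i.e.\ $f(\theta)=\tfrac{1}{2\varphi}$, and change variables to $u=\theta-\vartheta$; inserting the linearisations $\sin\theta\approx\sin\vartheta+u\cos\vartheta$ and $\cos\theta\approx\cos\vartheta-u\sin\vartheta$ into $\psi$ and collecting powers of $u$ — the $\cos^2\theta$ term being the only source of a $u^2$ contribution — turns the phase into $a+bu+cu^2$ with exactly the coefficients $a,b,c$ of the statement, so that $[\B{R}]_{m,n}\approx\tfrac{\beta}{2\varphi}\int_{-\varphi}^{\varphi}\eul^{\imj(a+bu+cu^2)}\,du$.

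Next I would complete the square, $a+bu+cu^2=c\big(u+\tfrac{b}{2c}\big)^2+\big(a-\tfrac{b^2}{4c}\big)$, pull the constant phase $\eul^{\imj(a-b^2/(4c))}$ out of the integral, and substitute $v=u+\tfrac{b}{2c}$. The remaining kernel $\int\eul^{\imj c v^2}\,dv$ becomes a Gaussian integral under the rotation $v=\tfrac{1+\imj}{\sqrt{2}}\,w/\sqrt{c}$, which sends $\eul^{\imj c v^2}$ to $\eul^{-w^2}$ and $dv$ to $\tfrac{1+\imj}{\sqrt{2}}\,dw/\sqrt{c}$, giving $\int\eul^{\imj c v^2}\,dv=\tfrac{(1+\imj)\sqrt{\pi}}{2\sqrt{2c}}\,\phi\!\big(\tfrac{1-\imj}{\sqrt{2}}\sqrt{c}\,v\big)$; evaluating this between the endpoints $v=\tfrac{b}{2c}\pm\varphi$, where the argument of $\phi$ takes the values $\tfrac{1-\imj}{\sqrt{2}}\big(\pm\sqrt{c}\,\varphi+\tfrac{b}{2\sqrt{c}}\big)$, and multiplying back by $\tfrac{\beta}{2\varphi}\eul^{\imj(a-b^2/(4c))}$ reproduces \eqref{eq:covUniform}.

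I expect the delicate part to be the truncation in the middle step: the expansions of $r_m$ and of the trigonometric functions must be arranged so that the exponent is \emph{exactly} a quadratic in $u$, since only then do completing the square and the $\phi$-substitution apply; a secondary technical point is justifying the analytic continuation of $\phi(\cdot)$ to complex arguments and fixing the branch of $\sqrt{c}$ when $c<0$. Note also that \eqref{eq:covUniform} is an indeterminate $0/0$ when $m=n$ — and more generally whenever $c=0$, e.g.\ at broadside $\sin\vartheta=0$ — so these cases should be read as the limit $c\to0$, in which the integral collapses to a sinc and $[\B{R}]_{m,m}=\beta$.
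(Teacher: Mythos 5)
Your proposal is correct and follows essentially the same route as the paper's appendix: the Fresnel (parabolic) approximation of $r_m$, the local-scattering substitution $\theta=\vartheta+\delta$ with a small-angle truncation that renders the phase exactly quadratic in the perturbation, and evaluation of the resulting Fresnel integral via completing the square and the complex error function. In fact your write-up is more complete than the paper's, which stops at the rearrangement of the phase into $\tilde a+\tilde b+\tilde c$ and leaves the linearisation in $\delta$, the uniform-density integration over $[-\varphi,\varphi]$, and the $\phi$-evaluation implicit; your remarks on the branch of $\sqrt{c}$ and the $c\to 0$ (e.g.\ $m=n$ or broadside) limit address degeneracies the paper does not discuss.
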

\begin{proof}\label{proof-th1}
See Appendix. 
\end{proof}

To compute the matrix of precoders $\B{F}^{(\ell)}$ in \eqref{eq:gupadte}, we model the  $k$-th user current channel state in terms of a past state plus an innovation component \cite{TrHe13}, as follows
\begin{equation}
	\B{h}_{\rm k}[n+1]=\alpha\B{h}_{\rm k}[n]+\B{z}_{\rm k}[n],
	\label{eq:errorModel}
\end{equation}
where $\B{h}_{\rm k}[n]$ is the channel state at the $n$-th coherence block, $\alpha$ is a temporal correlation parameter, and $\B{z}_{\rm k}[n]\sim\mathcal{N}(\B{0},(1-\alpha^2)\B{R}_{\rm \B{z}})$ is a channel innovation term uncorrelated with $\B{h}_{\rm k}[n]$. 
By considering the estimation error incurred during the $n$-th block,  $\breve{\B{h}}_{\rm k}[n]$, such that $\B{h}_{\rm k}[n]=\hat{\B{h}}_{\rm k}[n]+\breve{\B{h}}_{\rm k}[n]$, \eqref{eq:errorModel} results in
\begin{equation}
\B{h}_{\rm k}[n+1]=\alpha\hat{\B{h}}_{\rm k}[n]+\alpha\breve{\B{h}}_{\rm k}[n]+\B{z}_{\rm k}[n]=\alpha\hat{\B{h}}_{\rm k}[n]+\B{e}_{\rm k}[n],
\label{eq:himperfect}
\end{equation}
where $\B{e}_{\rm k}[n]$ is a zero-mean error term. Assuming least-squares channel estimation with an orthogonal training sequence, the covariance matrix of $\B{e}_{\rm k}[n]$ is $\B{R}_{\rm \B{e}}=\alpha^2\frac{\sigma_n^2}{ P_\text{TX}}\B{I}+(1-\alpha^2)\B{R}_{\rm \B{z}}$. Finally, for the spatial correlation of the innovation term, we derive $\Exp\big[\B{h}_{\rm k}\B{h}_{\rm k}^\He\big]$  in a way similar to \eqref{eq:avgChannelGain}, i.e.,
\begin{equation}
	\B{R}_{\rm \B{z}}=\sum\nolimits_{s=1}^{S_k}\hs_{\rm k,\rm s}\hs_{\rm k,\rm s}^\He+\B{R}_{\rm k}.
\end{equation}

\begin{algorithm}[t]
	\caption{I-CSI Scheduling-Precoding  (ISP)}\label{alg:Scheduler}
	\begin{algorithmic}[1]
		\small
		\STATEx \textbf{During data transmission stage for coherence block $n$}
		\STATE $\mathcal{S}^{(0)}\gets\emptyset$,  $\ell \gets 0$
		\STATE $g_k$, $\forall k \gets$ initialization with \eqref{eq:gain}
		\REPEAT
		\REPEAT
		\STATE $k\gets \max_{i\notin\mathcal{S}^{(\ell)}}g_{i}^{(\ell)}$
		\STATE $g_{k}^{(\ell)}\gets$ update using \eqref{eq:gupadte} 
		\STATE $q\gets \max_{i\notin\mathcal{S}^{(\ell)}}g_{i}^{(\ell)}$
		\UNTIL{$q=k$}
		\STATE  $\ell \gets \ell+1$
		\STATE $\mathcal{S}^{(\ell)}\gets\mathcal{S}^{(\ell-1)}\cup\{k\}$
		\STATE $\B{f}^{(\ell)}_{\rm k}\gets$ Compute ZF precoders \eqref{eq:ZF-ICSI}, $\forall k\in\mathcal{S}^{(\ell)}$
		\STATE ${\bf{p}}^{(\ell)}_{\rm k}\gets\B{f}^{(\ell)}_{\rm k}p_{\rm k}^{(\ell)} $, with waterfilling power allocation 
		\STATE $\tau_p^{(\ell)}=(|\mathcal{S}^{(\ell)}|+|\mathcal{G}|)\dot\tau$
		\STATE $\sum_{k\in\mathcal{S}}R_k$ with $\B{p}^{(\ell)}_{\rm k},\hat{\B{h}}_{\rm k}[n],\forall k$, and $\tau_p^{(\ell)}$ [cf. \eqref{eq:sumRate}]
		\UNTIL{$|\mathcal{S}^{(\ell)}|=K$ or performance metric decreases}
		\STATE Determine $\mathcal{G}$, $\mathcal{S}\gets\mathcal{S}^{(\ell)}$
		\STATEx \textbf{Training stage for coherence block $n+1$}
		\STATE Estimate channels for users $k\in\mathcal{S}\cup\mathcal{G}$
		\STATE Compute ZF precoders $\forall k\in\mathcal{S}$, with $\hat{\B{h}}_{\rm k}[n+1]$
		\STATEx \textbf{Data transmission stage for coherence block $n+1$}
	\end{algorithmic}
\end{algorithm}

Based on this \ac{CSI} uncertainty model, the \ac{ZF} precoding vectors in \eqref{eq:gupadte} $\forall k\in\mathcal{S}^{(\ell)}$ are computed as
\begin{equation}
[\tilde{\B{f}}_{\rm k_1}^{(\ell)},\ldots,\tilde{\B{f}}_{\rm k_\ell}^{(\ell)}]^T=(\hat{\B{H}}^{(\ell)}(\hat{\B{H}}^{(\ell)})^\He)^{-1}\hat{\B{H}}^{(\ell)},
\label{eq:ZF-ICSI}
\end{equation}
where $\hat{\B{H}}^{(\ell)}=[\hat{\B{h}}_{\rm k_1}[n],\ldots,\hat{\B{h}}_{\rm k_\ell}[n]]^T$ contains the outdated channel estimates. Then, the unit-norm precoders $\B{f}_{\rm k}^{(\ell)}$ are readily obtained as $\B{f}_{\rm k}^{(\ell)}=\tilde{\B{f}}_{\rm k}^{(\ell)}/\|\tilde{\B{f}}_{\rm k}^{(\ell)}\|$. These precoders, together with the updated power allocation $p_k$, $\forall k$ and pre-log factors, are used to compute the achievable \ac{SE} in \eqref{eq:sumRate} at each iteration. ISP iterations end when selecting a new user does not improve the achievable \ac{SE}, returning the final scheduling set $\mathcal{S}$. Interestingly, as only the long-term channel conditions and the $n$-th coherence block \ac{CSI} are required for the proposed procedure, we determine the set $\mathcal{S}$ \textit{prior} to the training procedure of the $n+1$ coherence block. We also propose to update the \ac{CSI} at the $n+1$ coherence block training stage only for those users with good channel conditions, i.e. either those scheduled for transmission ($k\in\mathcal{S}$), or those potential candidates for being scheduled ($k\in\mathcal{G}$), such that $\mathcal{S}\cap \mathcal{G}=\emptyset$.  Candidate users in $\mathcal{G}$ may be selected according to different criteria; for instance, users satisfying $g_k\geq \nu$, where $\nu$ is a threshold for the expected channel gains. This reduces the pre-log factor in \eqref{eq:sumRate}; specifically, if $\tau_p$ scales linearly with the number of users \cite{Demir22}, then
\begin{equation}
	\frac{\tau_c-\tau_p}{\tau_c}=\frac{\tau_c-(|\mathcal{S}|+|\mathcal{G}|)\dot\tau}{\tau_c},
\end{equation}
where $\dot\tau$ is the number of channel uses employed to acquire the \ac{CSI} of an individual user. 
Once the channel for the users belonging to $\mathcal{S}$ are estimated, the \ac{BS} computes the \ac{ZF} precoders based on the available imperfect \ac{CSI}, i.e., $\hat{\B{h}}_{\rm k}[n+1]$. Overall, the key benefits of ISP are the reduction of the training overhead, and the removal of the scheduling procedure from the data transmission stage. 

\section{Numerical Results}

The performance of ISP has been assessed through computer experiments. Unless explicitly stated, the simulation parameters are those in Table \ref{tab:Sim1} which correspond to an urban environment with users moving at a relatively high speed. For this setup, the \textit{critical distance} is $135$m \cite{GoLoCa21}, and we consider a temporal correlation factor $\alpha=J_0(2\pi f_d T_s\tau_s)$, with $J_0(\cdot)$ the Bessel function of the first kind and zero-th order, $f_d$ the Doppler frequency, $T_s$ the sampling period, and $\tau_s$ the \ac{CSI} delay in terms of samples \cite{Cl68}.

\begin{table}[t]
	\centering
	\caption{{Simulation parameter settings.}}\label{tab:Sim1}
	\setlength{\tabcolsep}{5pt}
	\def\arraystretch{1.2}
        \vspace*{-0.2cm}
	\begin{tabular}{|l|l|l|l|}
		\hline		
		{\textbf{Parameter}} & {\textbf{Value}} & {\textbf{Parameter}} & {\textbf{Value}} 		\\ \hline\hline
		Channel realiz.		&	100 &
		$\#$ of users & 		$K=200$ \\ \hline
		Specular comp. & 		$S_k=4$ 	&	
		Wavelength & 			$\lambda=0.15$ m 				\\ \hline
		Sampling freq. & 			$f_s=1$ MHz				
		& Antenna dist. & $d=\frac{\lambda}{2}$ m 	\\ \hline
		$\#$ of antennas & 	$M=200$  					&
		Power ratio & 	$\kappa=2$  						\\ \hline
		Angular range &		 	$[-\frac{\pi}{4}, \frac{\pi}{4}]$ rad &
		Distance range &		 	$[40,230]$ m \\ \hline 
		Angular std dev. &		 	$\sigma_\delta=10^\circ$ &
		Block len. (samp.) & $\tau_c=10000$ \\ \hline
		CSI delay (samp.) & $\tau_s=10000$  &
		User speed & $v=30$ km/h\\ \hline
	\end{tabular}
\end{table}
We first evaluate in Fig. \ref{fig:diffuse} the impact of different channel configuration parameters in \eqref{eq:channelModel} under the assumption of perfect \ac{CSI}, and use the optimal \ac{SUS} procedure in \cite{YoGo06} for benchmarking purposes. We consider different number of specular paths $S_k$ and power ratios $\kappa$. Experiments show that increasing the number of paths $S_k$ leads to larger channel gains and, accordingly, improves the achievable \ac{SE}. A similar effect arises when increasing the value of $\kappa$, i.e., the power received from the specular paths. We see that the worst performance is obtained for the \ac{LoS} case. In other words, richer channels facilitate obtaining better throughputs, as the improvements in channel gains dominate over the potential increase in interference to other users. 

\begin{figure}[t]
	\centering
	\includegraphics[width=.9\columnwidth]{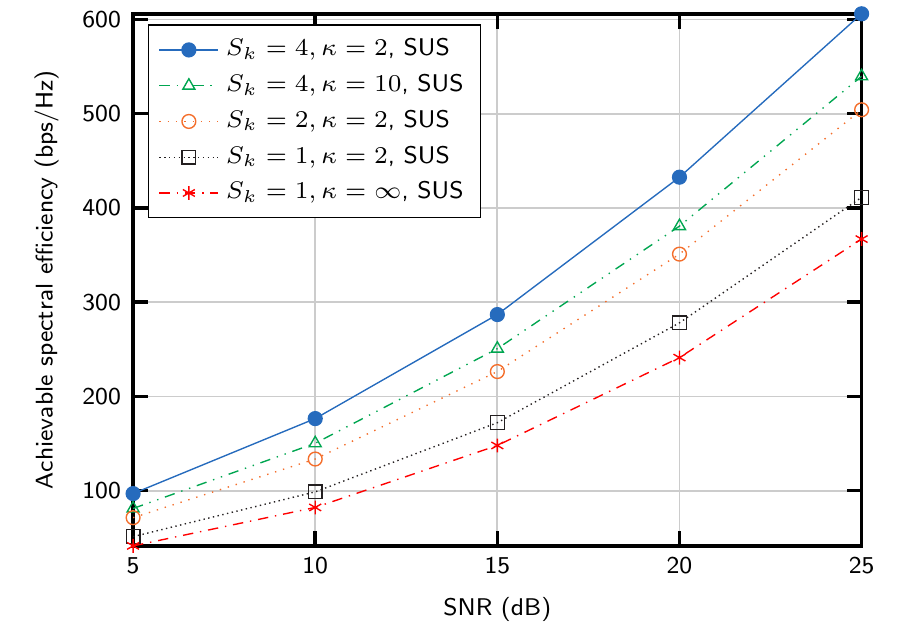}
	\caption{Achievable \ac{SE} vs. SNR (dB) for a different number of paths $S_k$ and power ratios $\kappa$. The remaining parameter values are given in Table \ref{tab:Sim1}. } 
	\label{fig:diffuse}
  \vspace{-2mm}
\end{figure}

In Fig. \ref{fig:imperfect} we assess the performance losses due to imperfect \ac{CSI} according to the channel model in \eqref{eq:himperfect} and $\tau_s=\{2000,10000\}$ samples. For the estimation error, we assume that the SNR remains constant during both training and data transmission. We compare with the perfect \ac{CSI} scenario, and also with the hypothetical situation where the scheduling procedure is performed with the imperfect \ac{CSI}, but the \ac{ZF}-precoders at the data transmission stage use the \textit{true} channels. This way, the performance losses caused by imperfect \ac{CSI} are separated for \textit{i}) scheduling and \textit{ii}) \ac{ZF} precoding. This latter curve is labeled as {ISP-P} for reference purposes. Remarkably, when disregarding the losses due to imperfect \ac{ZF} precoding. the ISP procedure operating with outdated \ac{CSI} achieves a performance similar to the perfect \ac{CSI} case.


\begin{figure}[t]
	\centering
	\includegraphics[width=.9\columnwidth]{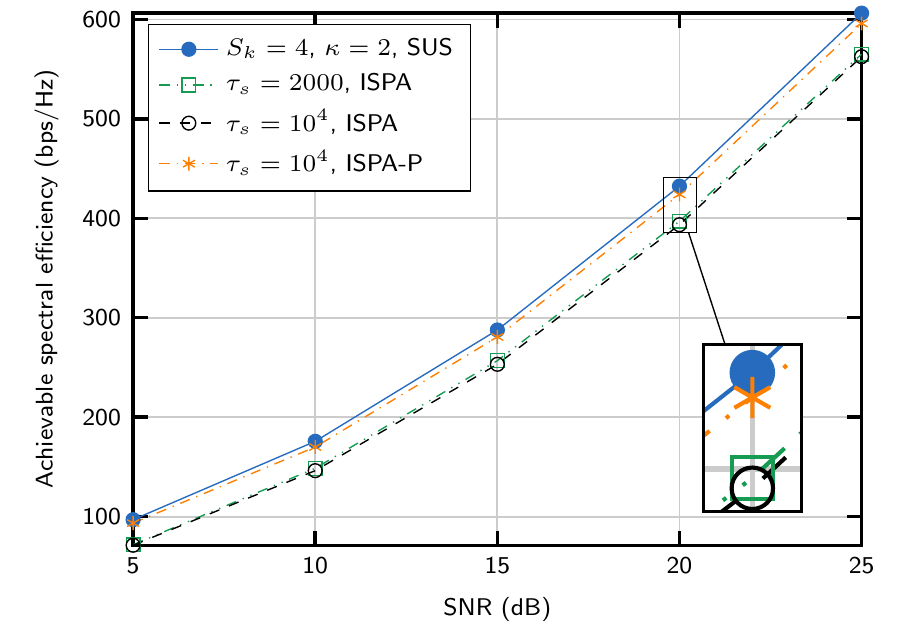}
	\caption{Achievable \ac{SE} vs. SNR (dB) for $S_k=4$, $\kappa=2$ and different \ac{CSI} delays $\tau_s$. Parameter values are given in Table \ref{tab:Sim1}. }
	\label{fig:imperfect} 
  \vspace{-2mm}
\end{figure}

\begin{figure}[h]
	\centering
	\includegraphics[width=.9\columnwidth]{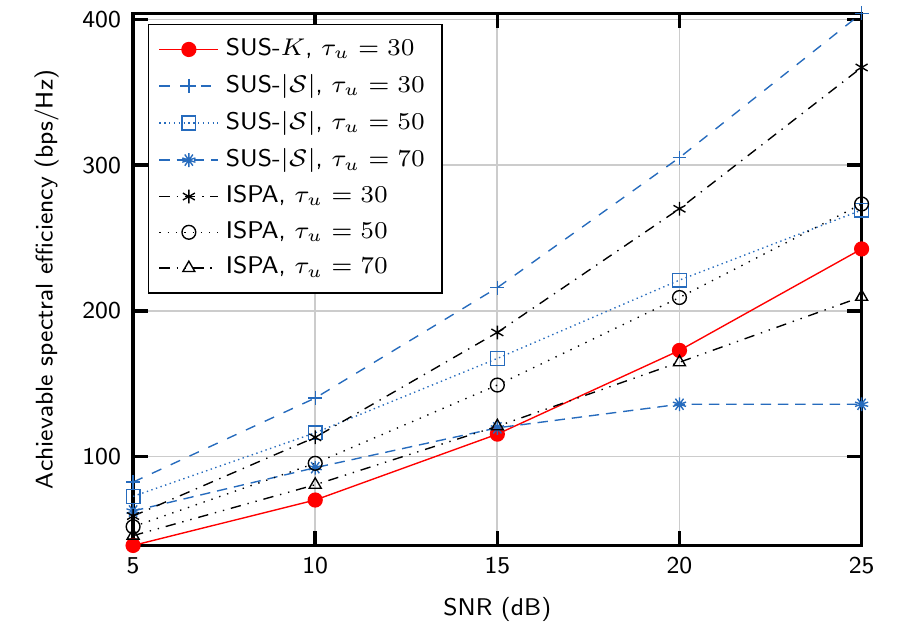}
	\caption{Achievable \ac{SE} vs. SNR (dB) for  $S_k=4$, $\kappa=2$, $\tau_s=10^4$, and several training overheads $\dot\tau$. Parameter values are given in Table \ref{tab:Sim1}. } 
	\label{fig:overhead}
  \vspace{-2mm}
\end{figure}

Fig. \ref{fig:overhead} studies the effect of the training overhead by modifying the amount of time $\dot\tau$ necessary to estimate an individual user. For completeness, we include the cases where \textit{all} users have their channels estimated, \textit{SUS-$K$}, and a genie-aided scheme where the training stage just updates the channels for the set of scheduled users, \textit{SUS-$|\mathcal{S}|$}. The training lengths considered are $\dot\tau\in\{30,50,70\}$ samples. Note that $\dot\tau=30$ is a very optimistic value for $M=200$ \cite{Demir22}. The curves for ISP assume that, in addition to the scheduled users in $\mathcal{S}$, the non-scheduled candidate users with large gains in $\mathcal{G}$ have their channels estimated. We observed in our experiments that the number of users  in $\mathcal{G}$ (i.e., those considered for scheduling but finally not served) is, on average, less than $10$ for $\dot\tau=30$, and smaller values are obtained for larger $\dot\tau$. Therefore, we consider $|\mathcal{G}|=15$ for the ISP configuration.

We observe in Fig. \ref{fig:overhead} a significant reduction in the achievable \ac{SE} caused by the training overhead and imperfect \ac{CSI}. Observe also that, if the channels for all $K$ users are estimated, the performance dramatically drops even in the optimistic scenario with $\dot\tau=30$. In turn, when the genie-aided approach is considered, the best result is achieved for $\dot\tau=30$. Nevertheless, as typical scheduling schemes ignore the performance impact of training overhead, when $\dot\tau=50$ the performance of the genie-aided approach and ISP are similar. We see that when $\dot\tau=70$ ISP provides better results for SNRs above $15$dB. In general, ISP selects users exhibiting large expected channel gains in \eqref{eq:gain}, which are reduced when \ac{CSI} accuracy is low, thus promoting a conservative users selection compared to conventional schemes. Finally, as $\dot\tau$ and the training overhead increases with both $M$ and $K$, we expect that the benefits of employing ISP increase for larger values of $M$ and/or $K$.



\section{Conclusion}
The design of user selection schemes for XL-MIMO systems under practical constraints is fundamental for their successful deployment. Assuming a realitic channel model beyond the conventional \ac{LoS} assumption, together with \ac{CSI} uncertainties and training overheads, we addressed this problem for the first time in the literature under SW propagation. The proposed algorithm incorporates this knowledge into the design, showing that the performance of the scheduler is similar to that achieved with solutions designed for perfect \ac{CSI}.

\begin{appendix}
\label{sec:correlation}

%
Starting from \eqref{eqR}, we consider a local scattering model where we have a nominal angle $\vartheta$ plus a random component $\delta$, i.e., $\theta=\vartheta+\delta$ \cite{Demir22}. We consider small values for $\delta$, i.e., $\delta<\frac{\pi}{12}$, since values of $\delta$ in this range are practical for urban environments, and smaller angles even apply to rural areas. To determine \eqref{eqR} we resort to the Fresnel approximation \cite{Sh62,CuDa22} 
\begin{equation}
r_m\approx r-md\sin(\vartheta+\delta)+\frac{m^2d^2\cos^2(\vartheta+\delta)}{2r}.
\end{equation}
Using trigonometrical identities, the integral in \eqref{eqR} is
\begin{align}
[\B{R}]_{m,n}&=\beta\int \eul^{{\imj\frac{2\pi}{\lambda}}\big[(m-n)d[\sin(\vartheta)\cos(\delta)+\cos(\vartheta)\sin(\delta)]}\notag\\
&\times\eul^{{\imj\frac{2\pi}{\lambda}}\frac{(n^2-m^2)d^2}{2r}[\cos(\vartheta)\cos(\delta)-\sin(\vartheta)\sin(\delta)]^2}f(\delta)d\delta\notag\\
&=\beta\int\eul^{\imj(\tilde{a}+\tilde{b}+\tilde{c})}f(\delta)d\delta, \label{eq:R}
\end{align}
where we have rearranged the terms as
\begin{align*}
\tilde{a}&=\tfrac{2\pi}{\lambda}\big[(m-n)d\sin(\vartheta)\cos(\delta)+\tfrac{(n^2-m^2)d^2}{2r}\cos^2(\vartheta)\cos^2(\delta)]\notag\\
\tilde{b}&=\tfrac{2\pi}{\lambda}\cos(\vartheta)\sin(\delta)[(m-n)d-\tfrac{(n^2-m^2)d^2}t{r}\cos(\delta)\sin(\vartheta)]\notag\\
\tilde{c}&=\tfrac{2\pi}{\lambda}\tfrac{(n^2-m^2)d^2}{2r}\sin^2(\vartheta)\sin^2(\delta).
\end{align*}
When $d/r\approx 0$, results reduce to the far-field scenario.

\end{appendix}

\bibliographystyle{IEEEtran}
\bibliography{references}

\begin{thebibliography}{10}
\providecommand{\url}[1]{#1}
\csname url@samestyle\endcsname
\providecommand{\newblock}{\relax}
\providecommand{\bibinfo}[2]{#2}
\providecommand{\BIBentrySTDinterwordspacing}{\spaceskip=0pt\relax}
\providecommand{\BIBentryALTinterwordstretchfactor}{4}
\providecommand{\BIBentryALTinterwordspacing}{\spaceskip=\fontdimen2\font plus
\BIBentryALTinterwordstretchfactor\fontdimen3\font minus
  \fontdimen4\font\relax}
\providecommand{\BIBforeignlanguage}[2]{{%
\expandafter\ifx\csname l@#1\endcsname\relax
\typeout{** WARNING: IEEEtran.bst: No hyphenation pattern has been}%
\typeout{** loaded for the language `#1'. Using the pattern for}%
\typeout{** the default language instead.}%
\else
\language=\csname l@#1\endcsname
\fi
#2}}
\providecommand{\BIBdecl}{\relax}
\BIBdecl

\bibitem{Emil2019}
E.~Bj{\"o}rnson, L.~Sanguinetti, H.~Wymeersch, J.~Hoydis, and T.~L. Marzetta,
  ``{Massive MIMO is a reality---What is next?: Five promising research
  directions for antenna arrays},'' \emph{Digit. Signal Process.}, vol.~94, pp.
  3--20, 2019.

\bibitem{LuZe22}
H.~Lu and Y.~Zeng, ``{Communicating With Extremely Large-Scale Array/Surface:
  Unified Modeling and Performance Analysis},'' \emph{IEEE Trans. Wireless
  Commun.}, vol.~21, no.~6, pp. 4039--4053, 2022.

\bibitem{Cui23}
M.~Cui, Z.~Wu, Y.~Lu, X.~Wei, and L.~Dai, ``{Near-Field Communications for 6G:
  Fundamentals, Challenges, Potentials, and Future Directions},'' \emph{IEEE
  Commun. Mag.}, pp. 1--7, 2022.

\bibitem{Marzetta2010}
T.~L. {Marzetta}, ``{Noncooperative Cellular Wireless with Unlimited Numbers of
  Base Station Antennas},'' \emph{IEEE Trans. Wireless Commun.}, vol.~9,
  no.~11, pp. 3590--3600, 2010.

\bibitem{Muller2016}
A.~Mueller, A.~Kammoun, E.~Bj{\"o}rnson, and M.~Debbah, ``{Linear precoding
  based on polynomial expansion: reducing complexity in massive MIMO},''
  \emph{EURASIP J. Wirel. Commun. Netw.}, vol. 2016, no.~1, p.~63, 2016.

\bibitem{YoGo06}
{Taesang Yoo} and A.~{Goldsmith}, ``{On the optimality of multiantenna
  broadcast scheduling using zero-forcing beamforming},'' \emph{IEEE J. Sel.
  Areas Commun.}, vol.~24, no.~3, pp. 528--541, 2006.

\bibitem{Han2020}
Y.~Han, S.~Jin, C.-K. Wen, and X.~Ma, ``{Channel Estimation for Extremely
  Large-Scale Massive MIMO Systems},'' \emph{IEEE Wireless Commun. Lett.},
  vol.~9, no.~5, pp. 633--637, 2020.

\bibitem{Chen20}
C.-M. Chen, Q.~Wang, A.~Gaber, A.~P. Guevara, and S.~Pollin, ``{User Scheduling
  and Antenna Topology in Dense Massive MIMO Networks: An Experimental
  Study},'' \emph{IEEE Trans. Wireless Commun.}, vol.~19, no.~9, pp.
  6210--6223, 2020.

\bibitem{Marinello2020}
J.~C. {Marinello}, T.~{Abr{\~a}o}, A.~{Amiri}, E.~{de Carvalho}, and
  P.~{Popovski}, ``{Antenna Selection for Improving Energy Efficiency in
  XL-MIMO Systems},'' \emph{IEEE Trans. Veh. Technol.}, vol.~69, no.~11, pp.
  13\,305--13\,318, 2020.

\bibitem{GoLoCa21}
J.~P. Gonz{\'a}lez-Coma, F.~J. L{\'o}pez-Mart{\'\i}nez, and L.~Castedo,
  ``{Low-Complexity Distance-Based Scheduling for Multi-User XL-MIMO
  Systems},'' \emph{IEEE Wireless Commun. Lett.}, vol.~10, no.~11, pp.
  2407--2411, 2021.

\bibitem{Filho2022}
J.~C.~M. Filho, G.~Brante, R.~D. Souza, and T.~Abr{\~a}o, ``{Exploring the
  Non-Overlapping Visibility Regions in XL-MIMO Random Access and
  Scheduling},'' \emph{IEEE Trans. Wireless Commun.}, vol.~21, no.~8, pp.
  6597--6610, 2022.

\bibitem{Demir22}
{\"O}.~T. Demir and E.~Bj{\"o}rnson, ``{Is Channel Estimation Necessary to
  Select Phase-Shifts for RIS-Assisted Massive MIMO?}'' \emph{IEEE Trans.
  Wireless Commun.}, vol.~21, no.~11, pp. 9537--9552, 2022.

\bibitem{TrHe13}
K.~T. Truong and R.~W. Heath, ``{Effects of channel aging in massive MIMO
  systems},'' \emph{J. Commun. Netw.}, vol.~15, no.~4, pp. 338--351, 2013.

\bibitem{metis2015}
V.~Nurmela, A.~Karttunen, A.~Roivainen, L.~Raschkowski, V.~Hovinen, J.~Y. EB,
  N.~Omaki, K.~Kusume, A.~Hekkala, R.~Weiler \emph{et~al.}, ``{Deliverable D1.
  4 METIS channel models},'' \emph{Proc. Mobile Wireless Commun. Enablers Inf.
  Soc.(METIS)}, vol.~1, 2015.

\bibitem{Souza2022}
J.~H.~I. de~Souza, J.~C. Marinello, A.~Amiri, and T.~Abrao, ``{QoS-Aware User
  Scheduling in Crowded XL-MIMO Systems Under Non-Stationary Multi-State
  LoS/NLoS Channels},'' \emph{arXiv preprint arXiv:2205.06861}, 2022.

\bibitem{Ribeiro2021}
L.~N. Ribeiro, S.~Schwarz, and M.~Haardt, ``{Low-Complexity Zero-Forcing
  Precoding for XL-MIMO Transmissions},'' \emph{arXiv preprint
  arXiv:2103.00971}, 2021.

\bibitem{Cl68}
R.~H. Clarke, ``A statistical theory of mobile-radio reception,'' \emph{The
  Bell System Technical Journal}, vol.~47, no.~6, pp. 957--1000, 1968.

\bibitem{Sh62}
J.~Sherman, ``Properties of focused apertures in the fresnel region,''
  \emph{IRE Trans. Antennas Propag.}, vol.~10, no.~4, pp. 399--408, 1962.

\bibitem{CuDa22}
M.~Cui and L.~Dai, ``{Channel Estimation for Extremely Large-Scale MIMO:
  Far-Field or Near-Field?}'' \emph{IEEE Trans. Commun.}, vol.~70, no.~4, pp.
  2663--2677, 2022.

\end{thebibliography}

\end{document}